\documentclass[onecolumn,11pt]{article}
\usepackage{amssymb,amsmath,amsthm,bm}
\usepackage{graphicx}%
\usepackage{color}
\usepackage{fullpage}
\usepackage{threeparttable}
\usepackage{verbatim}

\theoremstyle{definition}
\newtheorem{definition}{Definition}[]
\theoremstyle{definition}
\newtheorem{theorem}[definition]{Theorem}
\newtheorem{lemma}[definition]{Lemma}

\theoremstyle{remark}

\usepackage{dcolumn}
\newcolumntype{d}{D{.}{.}{-1}}

\usepackage[symbol]{footmisc}

\usepackage{authblk} 

\newcommand{\etal}{\textit{et~al.}}

\newcommand{\afs}[1]{\alpha_\text{fs}^{#1}}
\newcommand{\afsnosup}{\alpha_\text{fs}}

\newcommand{\cmplxi}{\mbox{$\mathrm{i}$}}
\newcommand{\Rvdw}[1]{R_\text{vdW}^{#1}}

\newcommand{\A}{{\!A}}
\newcommand{\eff}{\text{eff}}
\newcommand{\free}{\text{free}}
\newcommand{\NR}{\text{NR}}
\newcommand{\R}{\text{R}}
\newcommand{\RAIM}{\text{RAIM}}
\newcommand{\TS}{\text{TS}}

\newcommand{\dirac}{\textsc{dirac}} 

\newcommand{\mc}[3]{\multicolumn{#1}{#2}{#3}}

\usepackage{titlesec}

\usepackage[sort&compress,numbers,comma]{natbib}

\usepackage{hyperref}
\hypersetup{colorlinks,breaklinks,linkcolor=Blue,urlcolor=Blue,anchorcolor=Blue,citecolor=Blue}
\usepackage{color}
\definecolor{DarkBlue}{rgb}{0.0,0.08,0.45}
\definecolor{Blue}{rgb}{0.0,0.0,1.0}
\definecolor{Red}{rgb}{1.0,0.0,0.0}
\definecolor{RedOrange}{rgb}{0.9,0.0,0.2}
\definecolor{dgrn}{RGB}{0,150,0}
\definecolor{dgray}{gray}{0.3}

\makeatletter
\def\@fnsymbol#1{\ensuremath{\ifcase#1 \or 1 \or 2 \or \dagger \or \ast  \or \ddagger  
\or \mathsection\or \mathparagraph\or \|\or **\or \dagger\dagger
   \or \ddagger\ddagger \else\@ctrerr\fi}}
\makeatother

\begin{document}
\title{
	Relativistic Hirshfeld atoms in a molecule: An information-theoretic view, with application to Drude oscillator dispersion models 
}
\author{
	Keegan Paice\footnote{Present address: Dept.\ of Chemistry \& Chemical Biology, McMaster University, Hamilton, Ontario, Canada}
	~and
	John M. Herbert\footnote{Corresponding author ({\tt herbert@chemistry.ohio-state.edu})}
}
\affil{
\em     Department of Chemistry \& Biochemistry \\
\em The Ohio State University, Columbus, Ohio USA
}
\date{}\maketitle

\begin{abstract}\noindent 
Several \textit{ad hoc} dispersion models for density-functional theory are based on the use of 
Hirshfeld (or ``stockholder'') partition of a molecular charge density, which provides an \textit{in situ} definition of atomic size.  
We show that a recently introduced ``optimized'' quantum Drude oscillator model for dispersion admits
a closed-form solution in terms of the Lambert $W$ function, whose branches
identify the compact and diffuse oscillator solutions. The compact solution
determines the $C_8$ (dipole--quadrupole) dispersion coefficient 
analytically from the free-atom polarizability, $C_6$ coefficient, and van der Waals radius, 
without any reference $C_8$ data. 
Next, we provide a formal basis for a relativistic version of the atoms-in-molecule Hirshfeld partition.
Using four-component Dirac-Hartree-Fock densities for isolated atoms 
defines a strictly positive deformation field
that carries the relativistic changes in atomic density into the Hirshfeld partition. 
A uniqueness theorem for the non-relativistic case is extended to relativistic Hirshfeld atoms and admits 
an asymptotic expansion through
quadratic order in the fine-structure constant. Normalization requires the relativistic 
density correction to reshape the reference atom while preserving its population.
Finally, four-component polarizabilities and $C_6$ coefficients are
reported for closed-shell atoms and ions, which 
supply the reference data required to extend atoms-in-molecules dispersion models into the heavy-element regime.
Periodic trends are observable in a scalar contraction factor that measures relativistic effects. 
\\ \mbox{ } \\
\textbf{Keywords}  Atoms in molecules $\cdot$ Hirshfeld partition $\cdot$
Kullback-Leibler divergence $\cdot$ Dirac-Hartree-Fock $\cdot$
quantum Drude oscillator $\cdot$ Lambert $W$ function
\end{abstract}

\section{Introduction}
Hirshfeld~\cite{Hir77a} defined an atom in a molecule as a ``stockholder'' share of the molecular electron density, with dividends (of the
charge density) apportioned according to how each atom contributes in a superposition of free-atom densities, which is 
sometimes called the \textit{promolecule density}. 
This partition was later
shown to be the unique minimizer of the Kullback-Leibler (KL) divergence between the partitioned and reference densities~\cite{Aye06,Aye07}.
Therefore, an atom in a molecule is an information-theoretic construct rather than a region of space.
It is a way of dividing up what belongs to whom, given some prior belief about what each atom looked like before the molecule formed.
For an intermolecular, atoms-in-molecules dispersion model~\cite{HerDiSTka17,GraHer24a}, 
the reference has a physical role because the response of the separated fragments determines the long-range interaction; 
correlation determines the dispersion interaction~\cite{KhaGorMul25}.

The present work extends the Hirshfeld definition of an atom in a molecule make it consistent with the Dirac equation, 
for use in atoms-in-molecules models of intermolecular dispersion that use atomic size compression (and the concomitant scaling 
of free-atom polarizabilities) to determine \textit{in situ} atomic $C_6$ and $C_8$ coefficients~\cite{HerDiSTka17,GraHer24a}.
Dispersion models of this type include the Tkatchenko-Scheffler (TS) pairwise model~\cite{TkaSch09},
the many-body dispersion (MBD) model of Tkatchenko and co-workers~\cite{TkaAmbDiS13,AmbReiDiS14,GogKhaVac23}, 
and the Becke-Johnson exchange-dipole model (XDM)~\cite{BecJoh05a,BecJoh05b,BecJoh07}. 
Each of these requires free-atom polarizabilities as reference data, which must be extended into the relativistic regime in
order to have a satisfactory model of dispersion interactions for heavy elements. That extension is treated here. 

Mathematical results are presented in Section~\ref{sec:math}. 
After introducing standard Hirshfeld theory and the uniqueness theorem that comes from minimizing the KL divergence in
Section~\ref{sec:math:Hirshfeld}, we 
derive a closed-form solution for a certain ``optimized'' quantum Drude oscillator model of polarizability~\cite{GogKhaVac23}, 
in Section~\ref{sec:math:Drude}.
The solution (Theorem~\ref{thm:LambertW}) identifies the compact branch that was 
previously selected numerically~\cite{GogKhaVac23}, for use in the MBD model of dispersion.
Next, we use the variational characterization of the stockholder partition (as the unique minimizer
of the KL divergence against a promolecular prior)~\cite{Aye06,Aye07,BulVanAye07,HeiAyeVer18}, 
presented here as Theorem~\ref{thm:KL-stockholder} in slightly more general form and with a self-contained proof, 
to specialize to the Dirac limit, in which the relativistic stockholder weights
are the unique minimizer of a certain deformation functional (Lemma~\ref{lem:RAIM-KL}).
Finally, we obtain the asymptotic expansion of the minimum divergence in
the fine-structure constant $\afs{}$, through order $\afs{2}$, under stated regularity
assumptions (Theorem~\ref{thm:expansion}).

Numerical results are presented in Section~\ref{sec:calcs}.  There, we consider the ratios $\bar{\Lambda}_A$ of
relativistic (Dirac-Hartree-Fock, DHF) and non-relativistic (Hartree-Fock, HF) atomic volumes for isolated atoms $A$ across 
the periodic table, along with some atomic ions.  That volume scaling is a crucial ingredient for atoms-in-molecules dispersion
models, as it determines how the free-atom polarizability is modified by the molecular environment.  As a precursor to constructing
such models, the present results for $\bar{\Lambda}_A$ demonstrate the extent to which relativity modifies the polarizability.
Interesting periodic trends are evident in the ratios $\bar{\Lambda}_A$ as a function of atomic number. 

\section{Mathematical Results}\label{sec:math}
We first introduce standard Hirshfeld theory (Section~\ref{sec:math:Hirshfeld}) before illustrating how it is used to construct
atoms-in-molecules dispersion models (Section~\ref{sec:math:Drude}).  The latter discussion concludes with a new self-contained form for
the parameters in the recently introduced ``optimized'' quantum Drude oscillator model of dispersion~\cite{GogKhaVac23}.  A previous
result~\cite{Aye06,Aye07} connecting the Hirshfeld atom to variational minimization of the KL divergence is extended to the relativistic
regime in Section~\ref{sec:math:KL}.

\subsection{Standard Hirshfeld theory}\label{sec:math:Hirshfeld}
\subsubsection{Definitions and notation}

Let $(X,\mathcal{B},dx) := (\mathbb{R}^3,\text{Borel},\text{Lebesgue})$.
Let the atom index set be $\mathcal{A}:=\{1,\dots,M\}$.
Fix atomic electron populations $N_A>0$ and define the total electron number in the molecule to be 
\begin{equation}
	N := \sum_{A\in\mathcal{A}} N_A \; .
\end{equation}
Let the molecular electron density be $\rho \in L^1(X)$ with $\rho(x)\ge 0$ a.e., and 
\begin{equation}
	\int_X \rho(x)\,dx = N \; .
\end{equation}
Define a set of non-relativistic (NR) \textit{shape functions} $s_A^{\mathrm{NR}} \subset L^1(X)$, 
for atoms $A\in\mathcal{A}$, which satisfy 
$s_A^{\mathrm{NR}}(x)\ge 0$ a.e.\ and 
\begin{equation}
	\int_X s_A^{\mathrm{NR}}(x)\,dx = 1 \; .
\end{equation}
Up to normalization, these shape functions are simply the non-relativistic free-atom densities, $\rho_{A}^{0,\text{NR}}$:
\begin{equation}
	s_A^{\text{NR}}(x) := \rho_{A}^{0,\text{NR}}(x) / N_A \; .
\end{equation}
The non-relativistic promolecule density is 
\begin{equation}\label{eq:NR-promolecule}
	\rho^{0,\text{NR}}(x) := \sum_{A\in\mathcal{A}} \rho_{A}^{0,\text{NR}}(x) \; .
\end{equation}
We assume that $\rho^{0,\text{NR}}(x)>0$ a.e. on $\{\rho>0\}$.  The non-relativistic Hirschfeld weight function for atom~$A$ is 
\begin{equation}\label{eq:wNR-normalization}
	w_A^\text{NR}(x) := \frac{\rho_{A}^{0,\mathrm{NR}}(x)}{\rho^{0,\text{NR}}(x)} \; .
\end{equation}
It is normalized such that 
\begin{equation}
	\sum_{A\in\mathcal{A}} w_A^{\mathrm{NR}}(x)=1 \ \text{a.e.} .
\end{equation}
Finally, the non-relativistic Hirshfeld partition is  
\begin{equation} 
	\rho_A^{\text{NR}}(x) := w_A^{\text{NR}}(x)\,\rho(x) \; .
\end{equation}

\subsubsection{Main result}

We next introduce a result that is originally due to Ayers and collaborators~\cite{Aye06,Aye07,BulVanAye07,HeiAyeVer18}, namely, 
a variational characterization of the stockholder partition as the unique minimizer of the KL divergence.  This result 
is restated here as Theorem~\ref{thm:KL-stockholder}, 
in slightly more general form (with an arbitrary nonnegative reference $q_A$ 
on a $\sigma$-finite measure space) and with a self-contained proof.

\begin{theorem}[KL stockholder minimizer; Ayers \textit{et al}.]\label{thm:KL-stockholder}
Let $(X,\mathcal B,dx)$ be a $\sigma$-finite measure space and let the index set be
$\mathcal{A}=\{1,\dots,M\}$.
Fix $\rho\in L^1(X)$ with $\rho\ge 0$ a.e. and define the feasible set
\begin{equation}
	\mathcal S := \Big\{ (\rho_A)_{A\in \mathcal{A}}\in \prod_{A\in \mathcal{A}}L^1(X)\,:\,
	\rho_A\ge 0\ \text{a.e. and } \sum_{A\in \mathcal{A}}\rho_A=\rho\ \text{a.e.}\Big\}.
\end{equation}
Let $q_A$ (with $A\in \mathcal{A}$) be measurable functions with $q_A\ge 0$ a.e.\ and
$q_A\in L^1(X)$. These will be the Hirshfeld atomic electron populations and we set $q:=\sum_{A\in \mathcal{A}}q_A$.
Define the extended-valued functional
\begin{equation}\label{eq:L}
	\mathcal L\big[(\rho_A)_{A\in\mathcal{A}}\big] :=
	\sum_{A\in \mathcal{A}}\int_X \rho_A(x)\log\left[\frac{\rho_A(x)}{q_A(x)}\right] dx 
\end{equation}
with the conventions
\begin{equation}
	0\log\Big(\frac{0}{c}\Big):=0\ \ (c\ge 0),\qquad
	t\log\Big(\frac{t}{0}\Big):=+\infty\ \ (t>0) \; .
\end{equation}
Assume that 
\begin{equation}
	q(x)>0\quad\text{for a.e. }x\in\{\rho>0\} 
\end{equation}
and also
\begin{equation} 
	\int_X \rho\log\!\Big(\frac{\rho}{q}\Big) dx< \infty \; .
\end{equation}
(The latter condition simply excludes the degenerate case in which $\mathcal L$ is
identically $+\infty$ on $\mathcal S$, where every feasible point would minimize it, trivially.)
Under these conditions, $\mathcal L$ has a unique minimizer over $\mathcal S$ (up to $dx$-null sets), given by
\begin{equation}\label{eq:rhoA-star}
	\rho_A^\star(x)=\rho(x)\,\frac{q_A(x)}{q(x)}\qquad\text{for $x\in X$ a.e.} \; .
\end{equation}
Here, we interpret $\rho(x)\,q_A(x)/q(x)=0$ on $\{\rho=0\}$. Moreover,
\begin{equation}
	\min_{\mathcal S}\mathcal L=\int_X \rho(x)\log\left[\frac{\rho(x)}{q(x)}\right] dx \; ,
\end{equation}
which establishes the Hirshfeld partition as the unique minimizer of $\mathcal{L}$.
\end{theorem}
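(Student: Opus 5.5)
The plan is to use the log-sum inequality, applied pointwise, followed by integration. Fix a feasible $(\rho_A)\in\mathcal S$ and a point $x$ with $\rho(x)>0$ and $q(x)>0$; by hypothesis these points exhaust $\{\rho>0\}$ up to a null set. On $\{\rho=0\}$ every $\rho_A$ vanishes a.e. (nonnegativity plus $\sum_A\rho_A=\rho$), so by the convention $0\log(0/c)=0$ the integrand contributes nothing there. The key pointwise statement is the log-sum inequality
\begin{equation*}
\sum_{A}\rho_A(x)\log\frac{\rho_A(x)}{q_A(x)} \;\ge\; \rho(x)\log\frac{\rho(x)}{q(x)},
\end{equation*}
with equality iff $\rho_A(x)/q_A(x)$ is independent of $A$ over the indices with $\rho_A(x)>0$, and moreover $\rho_A(x)=0$ whenever $q_A(x)=0$. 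Under the hypotheses this forces $\rho_A(x)=\rho(x)q_A(x)/q(x)$.

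I will prove the log-sum inequality directly rather than cite it. If some $\rho_A(x)>0$ with $q_A(x)=0$, the left side is $+\infty$ and there is nothing to prove. Otherwise write $\pi_A:=q_A(x)/q(x)$, a probability vector, and $p_A:=\rho_A(x)/\rho(x)$, a probability vector supported inside $\operatorname{supp}\pi$. Then the left side minus the right side equals $\rho(x)\sum_A p_A\log(p_A/\pi_A)=\rho(x)\,D(p\Vert\pi)$. Gibbs' inequality gives $D(p\Vert\pi)\ge0$, with equality iff $p=\pi$; it follows from strict concavity of $\log$ through Jensen, or from $\log t\le t-1$ with equality only at $t=1$.

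The step I expect to be the main obstacle is measure-theoretic. One must check that $\mathcal L$ is well defined, possibly as $+\infty$ but never of the form $\infty-\infty$, and that integrating the pointwise inequality is legitimate.
\begin{itemize}
\item The negative part of each term is controlled. Because $0\le\rho_A\le\rho$, we have $\rho_A\log(\rho_A/q_A)\ge\rho_A\log(\rho_A/q)\ge \rho_A\log\rho_A-\rho_A\log q$.
\item A cleaner route is to use the pointwise identity $\sum_A\rho_A\log(\rho_A/q_A)=\rho\log(\rho/q)+\rho D(p\Vert\pi)$. The first summand is integrable by the finiteness assumption, which I read as $\rho\log(\rho/q)\in L^1$, its negative part being integrable since $\rho\log(\rho/q)\ge\rho-q$ with $\rho,q\in L^1$. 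The second summand is nonnegative.
\item Hence $\mathcal L$ is well defined in $(-\infty,+\infty]$ as $\int\rho\log(\rho/q)+\int\rho\,D(p\Vert\pi)$.
\item I would organize the proof around this decomposition, using the fact that a sum of finitely many integrands each bounded below by an integrable function can be integrated termwise.
\end{itemize}

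It then follows that $\mathcal L\ge\int\rho\log(\rho/q)$ on $\mathcal S$. Next check that $\rho_A^\star$ is feasible: it lies in $L^1$ because $0\le\rho_A^\star\le\rho$, it is nonnegative, and it sums to $\rho$ a.e. because $q>0$ on $\{\rho>0\}$. It attains the bound since $p=\pi$ there. For uniqueness, if $\mathcal L[(\rho_A)]$ equals the minimum, which is finite, then $\int\rho\,D(p\Vert\pi)=0$. So $D(p\Vert\pi)=0$ a.e. on $\{\rho>0\}$, giving $p=\pi$ a.e. there, i.e. $\rho_A=\rho_A^\star$ a.e. On $\{\rho=0\}$ both vanish a.e., which completes the proof.
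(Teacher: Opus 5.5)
Your proof is correct, but it is organized differently from the paper's.

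\textbf{The paper's route.} The paper treats the integrand pointwise as a convex program on the simplex $\{t_A\ge 0,\ \sum_A t_A=\rho(x)\}$. A Lagrange-multiplier computation produces the stationary point $t_A=\rho(x)q_A(x)/q(x)$, and strict convexity of $u\mapsto u\log u$ makes it the unique pointwise minimizer. The bound $u\log(u/c)\ge -c/e$ then justifies integrating the pointwise inequality. Finally, finiteness of the minimum is used to argue that equality of the integrals forces a.e.\ equality of the integrands.

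\textbf{Your route.} You use the chain-rule identity $\sum_A\rho_A\log(\rho_A/q_A)=\rho\log(\rho/q)+\rho\,D(p\Vert\pi)$ together with Gibbs' inequality. This gives the exact excess formula $\mathcal L[(\rho_A)]-\int\rho\log(\rho/q)\,dx=\int\rho\,D(p\Vert\pi)\,dx\ge 0$. That one identity buys three things:
\begin{itemize}
\item It delivers the minimum value, attainment and uniqueness at once.
\item It works on the closed simplex, so you never need the step the paper leaves implicit: that the interior stationary point is the global minimizer, and that no minimizer sits on a face where some $t_A=0$ although $q_A>0$.
\item It is quantitative: by Pinsker's inequality the excess dominates $\tfrac12\int\rho\,\lVert p-\pi\rVert_1^2\,dx$.
\end{itemize}
In exchange, the paper's argument is the generic convex-analysis template. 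It would survive replacing $u\log u$ by any strictly convex separable integrand, where no chain rule is available.

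\textbf{Two small repairs.}
\begin{enumerate}
\item Your bullet $\rho_A\log(\rho_A/q_A)\ge\rho_A\log\rho_A-\rho_A\log q$ is not an integrable lower bound, since the negative part of $\rho_A\log\rho_A$ need not be integrable on $\mathbb R^3$. Your decomposition also controls only the sum over $A$, whereas $\mathcal L$ is defined as a sum of separate integrals. Termwise integration needs the per-component bound $\rho_A\log(\rho_A/q_A)\ge\rho_A-q_A\ge -q_A$. This is the same inequality $t\log(t/c)\ge t-c$ you already used for $\rho\log(\rho/q)$, and it plays the role of the paper's $-c/e$ bound.
\item Your preliminary statement of the equality case (ratios $\rho_A/q_A$ constant over indices with $\rho_A(x)>0$) is too weak. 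For example, $(\rho_1,\rho_2)=(1,0)$ with $q_1=q_2=1$ satisfies it yet gives strict inequality, $0>-\log 2$. The Gibbs condition $p=\pi$ that you actually use is the correct one, so the conclusion stands.
\end{enumerate}
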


\begin{proof}
Fix $x\in X$ such that $\rho(x)$ and all $q_A(x)$ are finite and $q(x)>0$ whenever $\rho(x)>0$.
Consider the pointwise problem and minimize the functional 
\begin{equation} 
	F_x\big[(t_A)_A\big] := \sum_{A\in \mathcal{A}} t_A\log\left[\frac{t_A}{q_A(x)}\right]
\end{equation}
over $t_A\ge 0$ with $\sum_A t_A=\rho(x)$.
If $\rho(x)=0$, then the only feasible choice is $t_A=0$ and $F_x=0$.

Assume $\rho(x)>0$.
If $q_A(x)=0$ for some $A$, then any feasible choice with $t_A>0$ affords 
$t_A\log(t_A/0)=+\infty$, hence any finite minimizer must satisfy $t_A=0$ on those indices and we may safely 
restrict $A$ to indices with $q_A(x)>0$.

On the set $\{t_A>0\}$, $u\mapsto u\log u$ is strictly convex, hence $F_x$ is strictly convex
on the simplex $\{t_A\ge 0:\sum_A t_A=\rho(x)\}$, restricted to the aforementioned indices.
Introduce a Lagrange multiplier $\lambda$ for the constraint
\begin{equation}
\frac{\partial}{\partial t_A}\Big[t_A\log\Big(\frac{t_A}{q_A(x)}\Big)-\lambda t_A\Big]=0
\quad\Rightarrow\quad
\log\Big(\frac{t_A}{q_A(x)}\Big)+1-\lambda=0 \; .
\end{equation}
This affords 
\begin{equation}
	t_A = e^{\lambda-1}\,q_A(x) \; .
\end{equation}
Summing over $A$ gives $\rho(x)=e^{\lambda-1}q(x)$. Therefore, 
\begin{equation}
	t_A=\rho(x)\frac{q_A(x)}{q(x)} \; .
\end{equation}
By strict convexity, this pointwise minimizer is unique.

For $\rho_A^\star(x)$ as given in Eq.~\eqref{eq:rhoA-star}, $(\rho_A^\star)_A\in\mathcal S$.
For any $(\rho_A)_A\in\mathcal S$, pointwise minimality implies
$F_x[(\rho_A(x))_A] \ge F_x[(\rho_A^\star(x))_A]$ for $x$ a.e..
The functions $x\mapsto F_x[(\rho_A(x))_A]$ are the integrands that define $\mathcal L$ in Eq.~\eqref{eq:L}.
These functions are bounded below by an integrable function: minimizing
$u\mapsto u\log(u/c)$ over $u\ge0$ gives $u\log(u/c)\ge-c/e$, 
so $F_x\ge-q(x)/e$, and $q\in L^1(X)$ by hypothesis. Therefore, both sides are
well-defined in $(-\infty,+\infty]$ and the pointwise inequality may be
integrated, affording $\mathcal L[(\rho_A)_A] \ge \mathcal L[(\rho_A^\star)_A]$.
If equality holds, then the common value is finite by the assumption that 
$\int\rho\log(\rho/q)\,dx<+\infty$.  Therefore, the a.e.\ pointwise inequality may 
integrate to equality only if $F_x[(\rho_A(x))_A] = F_x[(\rho_A^\star(x))_A]$ for $x$ a.e..
Strict convexity at every $x\in\{\rho>0\}\cap\{q>0\}$ then forces
$\rho_A=\rho_A^\star$ a.e., proving global minimality and uniqueness up to null sets.

Finally, substituting $\rho_A^\star=\rho\,q_A/q$ gives
\begin{equation}
	\sum_A \rho_A^\star\log\Big(\frac{\rho_A^\star}{q_A}\Big)
	=\sum_A \rho\frac{q_A}{q}\log\Big(\frac{\rho}{q}\Big)
	=\rho\log\Big(\frac{\rho}{q}\Big) \; .
\end{equation}
Integration affords $\min_{\mathcal S}\mathcal L=\int \rho\log(\rho/q)$.
\end{proof}

This result will be generalized to the relativistic case in Section~\ref{sec:math:KL}.  
Before doing so, we first introduce how Hirshfeld partition is used in quantum Drude oscillator dispersion models.

\subsection{Quantum Drude oscillators}\label{sec:math:Drude}
\subsubsection{Tkatchenko--Scheffler scaling}\label{sec:math:Drude:TS}

The standard TS van der Waals (TS-vdW) 
scheme rescales free-atom response properties by the ratio 
of the \textit{in situ} atomic volume ($V_{\!A}^\eff$ for atom~$A$) to its corresponding free-atom value ($V_{\!A}^\free$).  That ratio is 
\begin{equation}\label{eq:TS}
	\gamma_{\!A}^\TS 
	:= \frac{ V_{\!A}^\eff }{ V_{\!A}^\free }	
	=\frac{\langle r^3 \rangle^{}_{\!A}}{\langle r^3 \rangle_{\!A}^0}
 \; .
\end{equation}
This quantity is used to define 
an effective (atoms-in-molecules) polarizability for atom~$A$, by taking advantage of the close correspondence of
polarizability and volume~\cite{BruMurPol93} and setting 
\begin{equation}
	\alpha_A^\eff = \gamma_A^\TS\,\alpha_A^\free \; .
\end{equation}
Pairwise $C_6$ dispersion coefficients are proportional to the polarizability of either partner, so we can define a $C_6$ coefficient 
for atom~$A$ as 
\begin{equation}
	C_{6,AA}^\eff = \big(\gamma_A^\TS\big)^2\,C_{6,AA}^\free \; .
\end{equation}
This accounts exclusively for geometric (volume) deformation of neutral atoms embedded in a molecular environment.
Relativistic deformation is added below.

The MBD model maps atomic polarizabilities onto harmonic (Drude) oscillators~\cite{CipCraSok19} having
parameters $\{m,q,\omega\}$, then solves the coupled many-body problem for their ground state. 
In the original MBD model~\cite{HerDiSTka17}, 
two of these parameters are fixed by atomic reference data that include polarizabilities and $C_6$
coefficients~\cite{JonCraSok13,GouBuc16,SchNag19}.
The remaining degree of freedom is eliminated either by setting $q=1$~a.u.\ or by supplying an \textit{ab initio} $C_8$ coefficient
and using the relationship $C_8 = 5\hbar C_6/\mu\omega$~\cite{JonCraSok13,CarLaoLiu19}.
However, computing $C_8$ means computing quadrupole polarizabilities, which is challenging~\cite{WooDun94,LaoYanDiS21,GogKhaVac23}.
The need for $C_8$ reference data is eliminated within the 
``optimized'' quantum Drude oscillator (QDO) formalism~\cite{GogKhaVac23}, which is described next.

\subsubsection{Optimized QDO formalism}\label{sec:math:Drude:opt}
In the optimized QDO parametrization~\cite{GogKhaVac23}, the final Drude oscillator 
parameter is not fixed by a reference $C_8$ value. Instead, the product $\mu\omega$ is determined by imposing that the QDO 
representation reproduces the same atomic radius ($R_\text{vdW}$) as that obtained from 
a universal relation between $R_\text{vdW}$ and the static dipole polarizability 
$[\alpha \equiv \alpha^{(1)}(0)]$.  The latter is derived from force balance in homonuclear dimers and given by~\cite{FedSadSto18}
\begin{equation}\label{eq:oqdo_alpha1_RvdW_universal}
	\alpha^{(1)}(0;\Rvdw{}) = \frac{ (4\pi\varepsilon_0) \Rvdw{7} }{ \big(a_0 / \afs{1/3}\big)^4 } 
\end{equation}
where $\afs{} :=e^2/(4\pi\varepsilon_0\hbar c)$ is the fine-structure constant and $a_0$ the Bohr radius.  The notation
$\alpha^{(1)}(0;\Rvdw{})$ in Eq.~\eqref{eq:oqdo_alpha1_RvdW_universal} indicates that the polarizability is evaluated in the static
(zero-frequency) limit, with $\Rvdw{}$ as a parameter.

Within the QDO model, the static polarizability admits the closed form
\begin{equation}\label{eq:oqdo_alpha1_RvdW_qdo}
	\alpha^{(1)}(0;\mu\omega,\Rvdw{}) =
	2^{7}\,(4\pi\varepsilon_0)\,
	R_{\mathrm{vdW}}^{\,7}\,
	\left(\frac{\mu\omega}{3\hbar}\right)^{\!2}
	\exp\!\left(-\frac{2\mu\omega R_{\mathrm{vdW}}^{\,2}}{\hbar}\right) \;, 
\end{equation}
adding $\mu\omega$ as another parameter.   The expressions in Eqs.~\eqref{eq:oqdo_alpha1_RvdW_universal} and 
\eqref{eq:oqdo_alpha1_RvdW_qdo} must be equal, which affords
\begin{equation}\label{eq:oqdo_equate}
	2^{7}\left(\frac{\mu\omega}{3\hbar}\right)^{\!2}
	\exp\!\left(-\frac{2\mu\omega \Rvdw{2}}{\hbar}\right)
	= \frac{\afs{4/3}}{a_0^{4}} \; .
\end{equation}
The positive square-root solution for $\mu\omega$ is a transcendental equation 
\begin{equation}\label{eq:oqdo_x_aexpbx}
	x = a\,e^{b x} \; ,
\end{equation}
expressed in terms of two dimensionless parameters
\begin{equation}\label{eq:a,b}
	a :=\frac{3\,\alpha_{\mathrm{fs}}^{2/3}}{8\sqrt{2}} \; ,
\qquad 
	b := \frac{ \Rvdw{2} }{ a_0^2 }
\end{equation}
and a dimensionless variable 
\begin{equation}\label{eq:x}
	x:=\frac{\mu\omega a_0^{2}}{\hbar} = \frac{a_0^2}{2\sigma^2}  \; .
\end{equation}
The quantity $\sigma^2 =\hbar/(2\mu\omega)$ in Eq.~\eqref{eq:x} 
is the variance of the QDO ground state~\cite{KhaFedAmb25}.

Let $W_k(z)$ denote the $k$th branch of the Lambert-$W$ function~\cite{CorGonHar96},
defined implicitly by
\begin{equation}\label{eq:LambertW_def}
	W_k(z)\,e^{W_k(z)} = z \; .
\end{equation}
Equation~\eqref{eq:oqdo_x_aexpbx} has the form
\begin{equation}\label{eq:oqdo_W_form}
	(-bx)e^{-bx}=-ab \; .
\end{equation}
and can be written compactly as 
\begin{equation}\label{eq:oqdo_y_W}
	y=W_k(-ab) \; ,
\end{equation}
where $y = -bx$. The solution is 
\begin{equation} \label{eq:oqdo_x_mu_solutions}
	x_k = -\frac{1}{b}W_k(-ab) \; .
\end{equation}
In terms of the desired QDO parameter $\mu\omega$, this is
\begin{equation}
	(\mu\omega)_k = \frac{\hbar x_k}{ a_0^2 }
	=-\frac{\hbar W_k(-ab)}{\Rvdw{2}} \; .
\end{equation}
This establishes the following result.

\begin{theorem}[Lambert-$W$ solution to the optimized QDO constraint]\label{thm:LambertW}
Let $W_k(z)$ denote the $k$th branch of the Lambert-$W$ function~\cite{CorGonHar96}, satisfying 
Eq.~\eqref{eq:LambertW_def}. For $a,b>0$ with $-1/e< -ab <0$, Eq.~\eqref{eq:oqdo_x_aexpbx} admits exactly two real solutions:
\begin{equation}
\label{eq:LambertW_sols}
x_k=-\frac{1}{b}\,W_k(-ab) \qquad \text{for}~k\in\{0,-1\} \; .
\end{equation}
These correspond to the principal branch ($k=0$) and the lower branch ($k=-1$).
\end{theorem}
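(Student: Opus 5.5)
The plan is to reduce $x=a\,e^{bx}$ to the defining equation of the Lambert $W$ function and then count real solutions using the shape of $g(y)=y\,e^{y}$ on $\mathbb{R}$.

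First, the reduction. Any real solution satisfies $x=a e^{bx}>0$, since $a>0$. Multiply Eq.~\eqref{eq:oqdo_x_aexpbx} by $-b\,e^{-bx}$ to obtain Eq.~\eqref{eq:oqdo_W_form}, namely $(-bx)e^{-bx}=-ab$. Setting $y=-bx$, this reads $g(y)=z$ with $z:=-ab$. The map $x\mapsto -bx$ is a bijection of $\mathbb{R}$ because $b>0$. The manipulation is reversible, since $-b\,e^{-bx}\neq 0$. Hence real solutions $x$ of Eq.~\eqref{eq:oqdo_x_aexpbx} correspond bijectively to real solutions $y$ of $y e^{y}=z$.

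Second, the count of solutions of $g(y)=z$ for $z\in(-1/e,0)$. We have $g'(y)=(1+y)e^{y}$, so $g$ is strictly decreasing on $(-\infty,-1]$ and strictly increasing on $[-1,\infty)$. Its minimum is $g(-1)=-1/e$, with $g(y)\to 0^-$ as $y\to-\infty$, and $g\ge 0$ for $y\ge 0$. On $[-1,0)$, $g$ increases continuously from $-1/e$ to $0$. By the intermediate value theorem and strict monotonicity, it attains each $z\in(-1/e,0)$ exactly once, and by definition this value is $W_0(z)\in(-1,0)$. On $(-\infty,-1]$, $g$ decreases continuously from $0^-$ to $-1/e$, so it attains $z$ exactly once, at $W_{-1}(z)<-1$. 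For $y\ge 0$ we have $g(y)\ge 0>z$, so there are no further solutions. The two solutions are distinct because $z\neq -1/e$.

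Finally, map back via $x=-y/b$. This gives exactly the two real solutions $x_k=-W_k(-ab)/b$ for $k\in\{0,-1\}$, and both are positive. The principal branch gives the smaller root, $x_0<1/b$, which is the diffuse oscillator. The lower branch gives $x_{-1}>1/b$, which is the compact oscillator.

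The only step needing care is the second one: identifying the real branches $W_0$ and $W_{-1}$ with the two monotone pieces of $g$ and confirming that no other branch yields real values. I would handle this by working directly with the real-analysis count above and citing the standard convention~\cite{CorGonHar96} that the real inverse branches of $g$ on $[-1,\infty)$ and $(-\infty,-1]$ are denoted $W_0$ and $W_{-1}$. This sidesteps any discussion of complex branches.
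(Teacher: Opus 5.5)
Your proof is correct and follows the same route as the paper: the substitution $y=-bx$ reduces the constraint to $y e^{y}=-ab$, and the two real pre-images of $z\mapsto z e^{z}$ for $z\in(-1/e,0)$, lying in $(-1,0)$ and $(-\infty,-1)$, are identified with $W_0$ and $W_{-1}$. You also prove the two-pre-image count from the monotonicity of $y e^{y}$ and check that the reduction is reversible, whereas the paper simply asserts both.
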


\begin{proof}
The reduction in Eqs.~\eqref{eq:oqdo_W_form}--\eqref{eq:oqdo_x_mu_solutions} puts the
constraint in the canonical form,
\begin{equation}
y\,e^{y}=-ab,\qquad y:=-bx
\end{equation}
so that $y=W_k(-ab)$ on any branch whose domain contains the point $-ab$.
It remains to determine which branches those are. On the real line, the map
$z\mapsto ze^z$ takes values in $[-1/e,\infty)$, and for each $z\in(-1/e,0)$ it
has exactly two real pre-images: one in $(-1,0)$, which is $W_0$, and one in
$(-\infty,-1)$, which is $W_{-1}$. (At the endpoint $z=-1/e$, the two solutions coalesce at
$-1$ and the solution is unique.)
By hypothesis, $-ab<0$ and $-ab>-1/e$, so $-ab$ lies in that interval. Both
real branches therefore contribute, and no others do. 
Undoing the substitution $y=-bx$ affords Eq.~\eqref{eq:LambertW_sols}.
\end{proof}

\subsubsection{Branch identification and selection}
\label{sec:math:Drude:branch}

Since $W_0(-ab)\in[-1,0)$ and $W_{-1}(-ab)\le-1$ for $-ab\in[-1/e,0)$, the two
solutions satisfy $x_0\le 1/b\le x_{-1}$. The principal branch gives the smaller
$\mu\omega$ (diffuse oscillator, large $\sigma$), while the lower branch affords the larger
$\mu\omega$ (compact oscillator, small $\sigma$). This provides an \textit{a~priori} identification of the
two solutions found numerically by G\'oger  \etal~\cite{GogKhaVac23}. 
Their physical ``solution~A'', whose spread $\sigma$ tracks atomic vdW radii 
across the periodic table, is the $W_{-1}$ branch; the 
rejected ``solution~B'' is the $W_0$ branch. Indeed, for $ab\ll1$ one has
$W_0(-ab)=-ab+O((ab)^2)$, hence $x_0=a(1+O(ab))$.  On the principal branch, 
$\mu\omega \to \hbar a/a_0^2$ in this limit, with weak dependence on
$R_{\mathrm{vdW}}$. The closed form identifies the compact solution preferred
on the basis of atomic-response comparisons and the small-overlap assumption
in the optimized QDO construction~\cite{GogKhaVac23}:
$x_{-1}=-W_{-1}(-ab)/b$. The branch
boundary $-ab=-1/e$, at which the two solutions coalesce, corresponds to the critical polarizability 
($\alpha/4\pi\varepsilon_0 \approx 431$~a.u.) that was deduced by G\'oger  \etal~\cite{GogKhaVac23}.  
Neutral-atom reference polarizabilities considered in that work 
lie below this value. For a prescribed radius, the
real-solution condition is $aR_{\mathrm{vdW}}^2/a_0^2\le 1/e$.

\subsubsection{\texorpdfstring{Implicit $\bm{C}_{\bm8}$ coefficient}{Implicit $C_8$ coefficient}}

Set $\mathcal{H}:=L^2(X)$ and fix 
a self-adjoint Hamiltonian $\hat H$ on $\mathcal{H}$ with eigenpairs $\{(E_n,\lvert n\rangle)\}_{n\ge 0}$, where $E_0=\inf\sigma(\hat H)$.
For $L\in\mathbb{N}$, define the (scalar, isotropic) multipole polarizabilities at imaginary frequencies $\cmplxi\omega$ 
(with $\omega \ge 0$) by 
\begin{equation}\label{eq:alphaL_def}
	\alpha^{(L)}(\cmplxi\omega) :=
	\sum_{n\neq 0}
	\frac{2(E_n-E_0)\,|\langle 0 \lvert \hat Q^{(L)} \rvert n\rangle|^2}{(E_n-E_0)^2+\omega^2} \; .
\end{equation}
Quantities $\hat{Q}^{(L)}$ are the multipole operators.  The first two are 
$\hat Q^{(1)}:=\hat\mu$ (dipole) and $\hat Q^{(2)}:=\hat\Theta$ (quadrupole).

We use atomic units in the following response formulas. A single (isotropic)
QDO corresponds to the Hamiltonian
\begin{equation}\label{eq:oqdo_H}
	\hat H_\text{QDO} = \frac{\hat p^2}{2m} +\frac{1}{2}m\omega_0^2 \hat r^2
\end{equation}
with parameters $\{m,\omega_0,q\}$.  The dipole polarizability of this QDO is 
\begin{equation} \label{eq:alpha1_oqdo}
	\alpha^{(1)}(\cmplxi\omega) = \frac{q^2/m}{\omega_0^2+\omega^2}\; .
\end{equation}
For $\omega=0$, this affords the identity $\alpha = q^2/m\omega_0^2$~\cite{JonCraSok13} that was introduced in 
Section~\ref{sec:math:Drude:TS}.
The corresponding $C_6$ coefficient satisfies the Casimir-Polder relation in the uncoupled approximation~\cite{CasPol48,Lon65,Her21a},
\begin{equation} \label{eq:C6_oqdo}
	C_6 =
	\frac{3\hbar}{\pi}\int_0^\infty \big[\alpha^{(1)}(\cmplxi u)\big]^2\,du
	=
	\frac{3}{4}\alpha^2\omega_0 \; .
\end{equation}
Hence, $\omega_0 = 4C_6/3\alpha^2$ and $q^2/m=\alpha\omega_0^2$.

For the same optimized QDO, the quadrupole operator $r^2\mbox{Y}_{2m}$ connects the ground
state to the two-quantum manifold at excitation energy $2\hbar\omega_0$, so the
quadrupole polarizability is a single Lorentzian with its pole at $2\omega_0$:
\begin{equation} \label{eq:alpha2_oqdo}
	\alpha^{(2)}(i\omega) =
	\alpha^{(2)}(0)\left[\frac{(2\omega_0)^2}{(2\omega_0)^2+\omega^2}\right] 
\end{equation}
with 
\begin{equation}
	\alpha^{(2)}(0)
	= \frac{3\hbar\alpha}{4m\omega_0}
	=
	\frac{3}{2}\,\sigma^2\alpha \; .
\end{equation}
For two isotropic centers, the dipole--quadrupole dispersion receives both ordered
channels, $(1_A,2_B)$ and $(2_A,1_B)$, so the Casimir-Polder integral for the homonuclear $C_8$ reads
\begin{align}\label{eq:C8_oqdo}
\begin{aligned}
	C_8 &=
	\frac{15\hbar}{2\pi}\int_0^\infty \big[\alpha^{(1)}(\cmplxi u) \; \alpha^{(2)}(\cmplxi u)
	+\alpha^{(2)}(\cmplxi u) \; \alpha^{(1)}(\cmplxi u)\big]\,du
\\
	&=
	5\hbar\omega_0\,\alpha^{(1)}(0)\,\alpha^{(2)}(0) \; .
\end{aligned}
\end{align}
This can be rewritten as $C_8 = (15/2)\hbar\omega_0 \alpha^2\sigma^2$.
In deriving Eq.~\eqref{eq:C8_oqdo}, each ordered channel has contributed
$(5/2)\hbar\omega_0\alpha^{(1)}(0) \; \alpha^{(2)}(0)$ via
\begin{equation}
	\int_0^\infty\!du\,[(u^2+\omega_0^2)(u^2+4\omega_0^2)]^{-1} = \frac{\pi}{12\omega_0^3} \; .
\end{equation}
This result is consistent with the parameterization by Martyna and co-workers~\cite{JonCraSok13}, namely, 
$\mu=5\hbar C_6/(\omega_0 C_8)$~\cite{GogKhaVac23}, since
$C_8/C_6=10\sigma^2=5\hbar/(\mu\omega_0)$. 

This result implied that a QDO-parametrized form with inputs $\{\alpha,C_6\}$, along with the optimized width parameter $\sigma$, 
the $C_8$ coefficient is determined by the relation between its dipole and quadrupole response.
On the physical branch identified in Section~\ref{sec:math:Drude:branch}, the optimized variance is 
$\sigma^2=\hbar/(2\mu\omega)_{-1} =-R_\text{vdW}^2/2W_{-1}(-ab)$ in atomic units. Thus, Eq.~\eqref{eq:C8_oqdo} affords 
\begin{equation}
	C_8 =
	-\frac{15\alpha^2\omega_0\,R_{\mathrm{vdW}}^{\,2}}{4\,W_{-1}(-ab)}
\end{equation}
and no reference data for $C_{8,AA}$ are required.  This is likely to be especially useful in the case of ions, 
for which reference $C_8$ data are largely unavailable.  The explicit dependence of 
$a$ on the fine structure constant [Eq.~\eqref{eq:a,b}] means the optimized QDO mapping
$\{\alpha,C_6,\Rvdw{}\} \mapsto \mu\omega$ inherits a non-trivial dependence on $\afs{}$, 
providing a secondary route by which relativistic corrections propagate into dispersion models.
The primary route will be relativistic free-atom reference data, to be determined in Section~\ref{sec:calcs}.

\subsection{Kullback-Leibler minimization}
\label{sec:math:KL}

To place the quantum Drude oscillator dispersion model on a relativistic basis, we need a 
relativistic extension of Hirshfeld's definition of an atom in a molecule.  Following some general formalism in
Section~\ref{sec:math:KL:formalism}, we proceed to generalize 
Theorem~\ref{thm:KL-stockholder} to a relativistic setting in Section~\ref{sec:math:KL:lemma}.  Finally, a perturbative
expansion (in $\afs{}$) for the relativistic deformation of the free-atom density 
is derived in Section~\ref{sec:math:KL:perturbative}.

\subsubsection{Relativistic Stockholder Partition}
\label{sec:math:KL:formalism}

Let us first fix a relativistic reference shape function $s_A^\R(x;\afs{})$. (In practice, this will be obtained 
from a DHF free-atom density, in Section~\ref{sec:calcs}.) Define a \textit{deformation field} $\Lambda_A$ for atom~$A$ 
as the ratio of relativistic and non-relativistic shape functions, 
\begin{equation}\label{eq:LambdaA}
	\Lambda_A(x;\alpha_{\mathrm{fs}}) := 
	\frac{ s^\R_A(x;\afs{}) }{ s_A^\NR(x) } \; .
\end{equation}
(The fine-structure constant appears as a parameter here, in order to motivate the perturbative expansion in 
Section~\ref{sec:math:KL:perturbative}.)
We assume that $s_A^\R$ is strictly positive and satisfies conditions
\begin{subequations}
\begin{align}\label{eq:LambdaA-normalize}
	\int_X s_A^\NR(x) \; \Lambda_A(x;\afs{}) \; dx &= 1 
\\ \label{eq:LambdaA-limit}
	\lim_{\afsnosup \to 0}\Lambda_A(x;\afs{})&=1 \ \text{a.e.}\; .
\end{align}
\end{subequations}
We also assume throughout that $\rho^{0,\NR}(x)>0$~a.e.. 
This ensures that the Hirshfeld weight function 
\begin{equation}
	w_A^\NR(x) := \frac{ \rho_A^{0,\NR}(x)}{ \rho^{0,\NR}(x) }
\end{equation}
is well-defined a.e.\ on $\{\rho>0\}$. Since 
\begin{equation}
	\rho_A^{0,\R}=N_A s_A^\R = N_A s_A^\NR \Lambda_A=\rho_A^{0,\NR} \Lambda_A \; ,
\end{equation}
one obtains 
\begin{equation}
	\rho^{0,\R} = \sum_A \rho_A^{0,\R} = \rho^{0,\NR}\sum_A w_A^\NR \Lambda_A \; .
\end{equation}

Because $s_A^\text{NR}$ may vanish on a null set (or in practice due to truncation), we interpret the deformation field on the essential support
\begin{equation}
	\Omega_A := \{x\in X : s_A^\text{NR}(x) > 0\} \; .
\end{equation}
We say that $\Lambda_A(\,\bm\cdot\,;\afs{})$ is an admissible deformation if
$\Lambda_A:\Omega_A\to(0,\infty)$ is measurable and
\begin{equation}
	s_A^\R(x;\afs{})=s_A^\NR(x)\,\Lambda_A(x;\afs{})
	\quad \text{for a.e. } x\in X\; .
\end{equation}
Equivalently, the ratio $s_A^\R/s_A^\NR$ is taken only on $\Omega_A$, and $\Lambda_A$ can be extended arbitrarily on 
$X\setminus\Omega_A$ because it has no effect on any expression involving $s_A^\NR\Lambda_A$.

Alternatively, instead of fixing the relativistic reference shape $s_A^\R$, 
one could construct reference shape functions by defining relativistic reference densities as
\begin{subequations}
\begin{align}
	s_A^{\mathrm{R}}(x;\alpha_{\mathrm{fs}}) &:= s_A^{\mathrm{NR}}(x) \, \Lambda_A(x;\alpha_{\mathrm{fs}}) 
\\
	\rho_A^{0,\mathrm{R}}(x;\alpha_{\mathrm{fs}}) &:= N_A \, s_A^{\mathrm{R}}(x;\alpha_{\mathrm{fs}}) 
\end{align}
\end{subequations}
and the relativistic promolecule as
\begin{equation}
	\rho^{0,\mathrm{R}}(x;\alpha_{\mathrm{fs}}) := \sum_{A\in\mathcal{A}} \rho_A^{0,\mathrm{R}}(x;\alpha_{\mathrm{fs}}) \; .
\end{equation}
We prefer to maintain $\Lambda_A$ as the primitive object. Throughout, the molecular density $\rho$ is held fixed (as obtained from an
electronic structure calculation), and the relativistic atoms-in-molecules (RAIM) 
procedure modifies only the reference promolecule used to define stockholder weights.
Those are defined for the relativistic setting as 
\begin{equation}
	w_A^{\mathrm{R}}(x;\afs{}) := 
	\frac{\rho_A^{0,\text{R}}(x;\afs{})}{\rho^{0,\mathrm{R}}(x;\afs{})}
\end{equation}
with 
\begin{equation}
	\sum_A w_A^\text{R}(x;\alpha_{\mathrm{fs}})=1 \quad \text{a.e.}.  
\end{equation}	
The RAIM (or relativistic stockholder) partition is
\begin{equation}
	\rho_A^{\mathrm{R}}(x;\alpha_{\mathrm{fs}}) := w_A^{\mathrm{R}}(x;\alpha_{\mathrm{fs}})\,\rho(x).
\end{equation}

Let $\mathcal{S}$ denote the set of admissible densities,
\begin{equation}
	\mathcal{S} :=
	\left\{
	(\rho_A)_{A\in\mathcal A}\in \prod_{A\in\mathcal A} L^1(\mathbb{R}^3)
	\;:\;
	\rho_A(x)\ge 0 \ \text{a.e.},\ \
	\sum_{A\in\mathcal A}\rho_A(x)=\rho(x)\ \text{a.e.}
	\right\} \; .
\end{equation}
We can now define the relativistic functional
\begin{equation}
	\mathcal{L}_{\afs{}}\big[(\rho_A)_A\big] :=
	\sum_{A\in\mathcal{A}} \int_X \rho_A(x)\,
	\log\left[\frac{\rho_A(x)}{\rho_A^{0,\mathrm{R}}(x;\afs{})}\right] dx \; .
\end{equation}
This functional form follows directly from the information-theoretic formulation of the Hirshfeld 
partition~\cite{Aye06,BulVanAye07,HeiAyeVer18}.

\subsubsection{Uniqueness Result}
\label{sec:math:KL:lemma}

To apply Theorem~\ref{thm:KL-stockholder} in the relativistic setting,  we make the pointwise identifications
\begin{subequations}
\begin{align}
	q_A(x) &\equiv \rho_A^{0,\mathrm R}(x;\afs{})
\\
	q(x) &\equiv \sum_{B\in\mathcal A} q_B(x)=\rho^{0,\mathrm R}(x;\afs{}) \; .
\end{align}
\end{subequations}
We assume that $\int_X\rho\log(\rho/q)\,dx<\infty$ for the reference choices
under consideration, including the non-relativistic reference.
Under these identifications, the functional $\mathcal L_{\afs{}}$ is exactly of the form 
$\sum_A\int \rho_A \log(\rho_A/q_A)\,dx$, and Theorem~\ref{thm:KL-stockholder} yields the unique minimizer 
\begin{equation}
	\rho_A^\RAIM =\rho\,q_A/q=\rho\,\rho_A^{0,\text{R}}/\rho^{0,\text{R}}=\rho\,w_A^\R \; .
\end{equation}

The unique minimizer for RAIM over $\mathcal{S}$ is $\rho_A^\RAIM(x;\afs{})$, and we define its minimum value to be 
\begin{equation}
	\min_{\mathcal{S}}\mathcal{L}_{\afsnosup} =
	D(\alpha_{\mathrm{fs}}) :=
	\int_X \rho(x)  \log\left[\frac{\rho(x)}{\rho^{0,\mathrm{R}}(x;\afs{})}\right] dx \; .
\end{equation}
Note the identity
\begin{equation}\label{eq:auto:1}
	\rho^{0,\R}(x;\afs{}) =
	\rho^{0,\NR}(x)\sum_{A\in\mathcal{A}} w_A^\NR(x) \; \Lambda_A(x;\afs{}) \; ,
\end{equation}
which affords 
\begin{equation}\label{eq:auto:2}
	D(\alpha_{\mathrm{fs}}) = D(0) -
	\int_X \rho(x)\,
	\log\left[\sum_{A\in\mathcal{A}} w_A^\NR(x) \; \Lambda_A(x;\afs{})\right] dx
\end{equation}
where
\begin{equation} \label{eq:auto:3}
	D(0)=\int_X \rho(x)\,\log\left[\frac{\rho(x)}{\rho^{0,\NR}(x)}\right] dx \; .
\end{equation}

For dominated convergence, a sufficient condition is the existence of constants
$0<m\le M<\infty$ and $\alpha_\text{fs,max}^{}>0$ such that
\begin{equation}
\label{eq:auto:4}
	m \le \Lambda_A(x;\afs{})\le M \
	\text{a.e., for all }A\in\mathcal{A},\ |\afs{}|\le \alpha_\text{fs,max}^{} \; .
\end{equation}
That is, $\Lambda_A$ is bounded above and bounded away from zero uniformly in the
coupling ($\afs{}$) on all of $X$. A finite nuclear charge distribution regularizes the
behavior near the origin~\cite{VisDya97}; the whole-space bounds also require
control of the density ratios in the tails and are imposed here as hypotheses.

Under Eq.~\eqref{eq:auto:4}, the convex combination inherits the same bounds. The normalization condition in 
Eq.~\eqref{eq:wNR-normalization} implies that 
$m\le \sum_A w_A^{\mathrm{NR}}\Lambda_A \le M$. Therefore, 
\begin{equation}
	\left|\log\Big(\sum_A w_A^\NR \; \Lambda_A\Big)\right|\le C:=\max\big(|\log m|,|\log M|\big) \; .
\end{equation}
The integrand in Eq.~\eqref{eq:auto:2} is thus dominated by $C\rho$, which lies in $L^1(X)$
because $\int_X\rho\,dx=N<\infty$.

From the non-relativistic limit of $\Lambda_A$ in Eq.~\eqref{eq:LambdaA-limit}, 
the convex combination tends to unity a.e.\ and its
logarithm tends to zero a.e. The dominated convergence theorem therefore applies to the
integral in Eq.~\eqref{eq:auto:2}, and
\begin{equation} \label{eq:auto:5}
	\lim_{\afsnosup \to 0} D(\afs{})=D(0) \; .
\end{equation}
Applying the same estimates to the relativistic Hirshfeld weights, written in the form 
\begin{equation}
	w_A^\text{R} = \frac{ w_A^\NR \Lambda_A }{ \sum_B w_B^\NR\Lambda_B } 
\end{equation}
affords $w_A^\R(\,\bm\cdot\,;\afs{})\to w_A^\NR$ a.e.\ and
$\rho_A^\RAIM(\,\bm\cdot\,;\afs{})\to w_A^\NR\rho$ in $L^1$.

We can now establish the main result in this section, which is a relativistic extension of Theorem~\ref{thm:KL-stockholder}.
Let $s_A^{\mathrm{NR}},s_A^{\mathrm{R}}\in L^1(X)$ be normalized shape functions:
\begin{equation} \label{eq:shape_norm}
	s_A^{(\cdot)}\ge 0,\qquad \int_X s_A^{(\cdot)}(x)\,dx=1 \; .
\end{equation}
The specialization of Theorem~\ref{thm:KL-stockholder} to the Dirac limit, as stated in 
Lemma~\ref{lem:RAIM-KL} below, is the contribution of the present work, not the general uniqueness result.

\begin{lemma}[RAIM as the unique KL-stockholder partition in the Dirac limit]\label{lem:RAIM-KL}
Let $(X,\mathcal B,dx)=(\mathbb R^3,\mathrm{Borel},dx)$.
Let $\rho\in L^1(X)$ with $\rho\ge 0$ a.e.\ be the molecular electron density.
Fix an atom index set $\mathcal A=\{1,\dots,M\}$ and populations $N_A>0$.

For each $A\in\mathcal A$, let $s_A^\R(\,\bm\cdot\,;\afs{})\in L^1(X)$ satisfy
$s_A^\R\ge 0$ a.e. and $\int_X s_A^{\mathrm R}(x;\alpha_{\mathrm{fs}})\,dx=1$, and define the
(relativistic) promolecule component 
\begin{equation}
	\rho_{0,A}^\R(x;\afs{}) := N_A\,s_A^\R(x;\afs{})
\end{equation} 
and promolecule density 
\begin{equation}
	\rho_0^\R(x;\afs{}):=\sum_{B\in\mathcal A}\rho_{0,B}^\R(x;\afs{}) \; .
\end{equation}
Assume that 
\begin{equation}
	\rho_0^\R(x;\afs{})>0\quad\text{for a.e. }x\in\{\rho>0\} 
\end{equation}
and
\begin{equation}
	\int_X \rho\log\left(\frac{\rho}{\rho_0^{\mathrm R}}\right) dx < \infty \; .
\end{equation}
Define the feasible set
\begin{equation}
	\mathcal S:=\Big\{(\rho_A)_{A\in\mathcal A}\in \prod_{A\in\mathcal A}L^1(X):
	\rho_A\ge 0\ \text{a.e. and }\sum_{A\in\mathcal A}\rho_A=\rho\ \text{a.e.}\Big\} \; .
\end{equation}
and an extended-valued functional
\begin{equation}\label{eq:L-RAIM}
	\mathcal L_{\afsnosup}\big[(\rho_A)_{A\in\mathcal A}\big]
	:=\sum_{A\in\mathcal A}\int_X \rho_A(x)\log\left[\frac{\rho_A(x)}{\rho_{0,A}^{\mathrm R}(x;\afs{})}\right] dx 
\end{equation}
with conventions $0\log(0/c)=0$ for $c\ge 0$ and $t\log(t/0)=+\infty$ for $t>0$.
Then $\mathcal L_{\alpha_{\mathrm{fs}}}$ admits a unique minimizer over $\mathcal S$ (up to $dx$-null sets),
given by the RAIM stockholder partition:
\begin{equation}
	\rho_A^\RAIM(x;\afs{}) =\rho(x)\,w_A^\R(x;\afs{}) 
\end{equation}
with
\begin{equation}
	w_A^\R(x;\afs{}):=
	\frac{\rho_{0,A}^\R(x;\afs{})}{\rho_0^\R(x;\afs{})} \quad\text{for a.e. }x\in X \; .
\end{equation}
Moreover, the minimum value is
\begin{equation}
	D(\afs{})  :=
	\min_{(\rho_A)\in\mathcal S}\mathcal L_{\afsnosup}\big[(\rho_A)_A\big]
	=
		\int_X \rho(x)\log\left[\frac{\rho(x)}{\rho_0^\R(x;\afs{})}\right] dx 
\end{equation}
\end{lemma}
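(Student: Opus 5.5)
The plan is to obtain Lemma~\ref{lem:RAIM-KL} as a direct specialization of Theorem~\ref{thm:KL-stockholder}. We fix $\afs{}$ and regard it as a parameter throughout. The measure space $(\mathbb R^3,\mathrm{Borel},dx)$ is $\sigma$-finite, so the setting of that theorem applies. We then make the identifications
\begin{equation*}
q_A(x):=\rho_{0,A}^\R(x;\afs{})=N_A\,s_A^\R(x;\afs{}),\qquad q:=\sum_{B\in\mathcal A}q_B=\rho_0^\R(\,\bm\cdot\,;\afs{}) .
\end{equation*}
With these choices, the feasible set $\mathcal S$ and the functional $\mathcal L_{\afsnosup}$ in Eq.~\eqref{eq:L-RAIM} coincide verbatim with those of Theorem~\ref{thm:KL-stockholder}, including the conventions $0\log(0/c)=0$ and $t\log(t/0)=+\infty$.

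The first step is to verify the hypotheses of Theorem~\ref{thm:KL-stockholder}. Measurability and $q_A\ge 0$ a.e.\ follow from $s_A^\R\ge0$ a.e. Integrability holds because $\int_X q_A\,dx=N_A\int_X s_A^\R\,dx=N_A<\infty$, so that $q\in L^1(X)$ with $\int_X q\,dx=N$. This integrability matters: the proof of Theorem~\ref{thm:KL-stockholder} uses the integrable lower bound $F_x\ge -q(x)/e$, which is what allows the pointwise inequality to be integrated. The positivity condition $q>0$ a.e.\ on $\{\rho>0\}$ and the finiteness condition $\int\rho\log(\rho/q)<\infty$ are exactly the two assumptions stated in the lemma. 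Note that no strict positivity of $s_A^\R$ and no admissibility of the deformation $\Lambda_A$ are needed here; those hypotheses are only used later, for the limit $\afs{}\to0$.

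The second step is to read off the conclusions. Theorem~\ref{thm:KL-stockholder} gives a unique minimizer up to null sets,
\begin{equation*}
\rho_A^\star=\rho\,q_A/q=\rho\,\rho_{0,A}^\R/\rho_0^\R=\rho\,w_A^\R ,
\end{equation*}
which is $\rho_A^\RAIM$. The weight $w_A^\R$ need only be defined a.e.\ on $\{\rho>0\}$, where $\rho_0^\R>0$. On $\{\rho=0\}$ we use the theorem's convention that the product is zero, so any choice of $w_A^\R$ there is harmless. Summing the weights confirms that $\sum_A w_A^\R=1$ on that set, so the candidate lies in $\mathcal S$. The minimum value $D(\afs{})=\int_X\rho\log(\rho/\rho_0^\R)\,dx$ is then the theorem's formula for $\min_{\mathcal S}\mathcal L$.

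No step is genuinely difficult. The only point requiring care is bookkeeping on null sets: $w_A^\R$ is undefined where $\rho_0^\R=0$, and the extended-valued conventions must agree with those of Theorem~\ref{thm:KL-stockholder} so that uniqueness "up to $dx$-null sets" transfers unchanged. Should a skeptic doubt the transfer, the fallback is to repeat the pointwise strict-convexity and Lagrange-multiplier argument of that proof with $q_A=N_A s_A^\R$.
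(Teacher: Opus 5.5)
Your proposal is correct and is essentially the paper's proof: you specialize Theorem~\ref{thm:KL-stockholder} with $q_A=\rho_{0,A}^{\mathrm R}$ and $q=\rho_0^{\mathrm R}$, then read off the minimizer $\rho\,q_A/q$ and the minimum value $\int_X\rho\log(\rho/q)\,dx$. Your explicit checks that $q_A\in L^1(X)$ (since $\int_X q_A\,dx=N_A$) and that Lebesgue measure on $\mathbb R^3$ is $\sigma$-finite usefully fill in hypotheses that the paper's proof leaves implicit.
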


\begin{proof}
Apply Theorem~\ref{thm:KL-stockholder} with the choice
$q_A(x)=\rho_{0,A}^\R(x;\afs{})$ and $q(x)=\rho_0^\R(x;\afs{})$.
The hypothesis $q>0$ a.e. on $\{\rho>0\}$ is exactly the stated condition on $\rho_0^\R$.
The theorem yields the unique minimizer
$\rho_A^\star=\rho\,q_A/q=\rho\,\rho_{0,A}^\R/\rho_0^\R$, which is precisely
$\rho_A^\RAIM=\rho\,w_A^\R$.
Substituting the minimizer into $\mathcal L_{\afsnosup}$ gives
$\min \mathcal L_{\afsnosup}=\int \rho\log(\rho/\rho_0^\R)=D(\afs{})$.
\end{proof}

Thus, the RAIM weights $w_A^\R$ are the unique weights that minimize the total KL divergence from the promolecule components $\rho_{0,A}^\R$, subject to reproducing the fixed molecular density $\rho$.

\subsubsection{Perturbative Expansion}
\label{sec:math:KL:perturbative}

For the perturbative statement, assume in addition the small-$\afs{}$ expansion 
\begin{equation}\label{eq:auto:6}
	\Lambda_A(x;\afs{}) =1+\alpha_{\mathrm{fs}}^2 g_A(x)+R_A(x;\afs{})
\end{equation}
with
\begin{equation}
	\big|R_A(x;\afs{})\big| \le \afs{4} h_A(x) 
\end{equation}
holding uniformly for $|\afs{}| \le \alpha^{}_\text{fs,0}$, with 
measurable $g_A$ and $h_A$ satisfying
\begin{subequations}
\begin{align}
	\rho\,|g_A| &\in L^1(X) 
\\
	\rho\,|g_A|^2 &\in L^1(X) 
\\
	\rho\,h_A^2 &\in L^1(X) \; .
\end{align}
\end{subequations}
Since $\rho$ is a finite measure, $\rho h_A^2\in L^1$ implies $\rho h_A\in L^1$
by the Cauchy-Schwarz inequality, so the remainder is integrable against $\rho$ as well.

The even-power expansion in Eq.~\eqref{eq:auto:6}
is consistent with a leading relativistic correction
of order $c^{-2}$ at fixed electronic configuration. The uniform remainder and
integrability conditions are additional assumptions on the reference family;
a finite nuclear model and a finite orbital basis alone do not establish them.
The normalization
$\int_X s_A^{\mathrm{NR}}\Lambda_A\,dx=1$, combined with Eq.~\eqref{eq:auto:6} and the
integrability condition $s_A^{\mathrm{NR}}h_A\in L^1(X)$, forces the mean-zero condition 
\begin{equation}
	\int_X s_A^{\mathrm{NR}}(x) \; g_A(x)\,dx=0 \; .
\end{equation}
The leading correction is a genuine reshaping of the reference atom, not a rescaling.  Defining
\begin{equation}\label{eq:bar-g}
	\bar g(x):=\sum_{A\in\mathcal{A}} w_A^\NR(x)\; g_A(x) \; ,
\end{equation}
the sign of $\int\rho\,\bar g\,dx$ 
is determined by where the molecular density overlaps the expansion ($g_A>0$) versus depletion ($g_A<0$) regions of each atom.

\begin{theorem}[Perturbative expansion of $D(\afs{})$]\label{thm:expansion}
Assume $D(0)<\infty$, the uniform bounds in Eq.~\eqref{eq:auto:4}, and the expansion in Eq.~\eqref{eq:auto:6}
with the stated integrability conditions. Then, with $\bar g$ as defined in Eq.~\eqref{eq:bar-g}, one has
\begin{equation}\label{eq:auto:9} 
	D(\alpha_{\mathrm{fs}}) =
	D(0) - \afs{2} \int_X \rho(x)\,\bar g(x)\,dx
	+ O(\afs{4})
	\qquad \text{as}~\alpha_{\mathrm{fs}}\to0\; .
\end{equation}
\end{theorem}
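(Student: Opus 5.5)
The plan is to start from the exact identity in Eq.~\eqref{eq:auto:2}, which rewrites $D(\afs{})-D(0)$ as $-\int_X \rho\,\log S\,dx$ with $S(x;\afs{}):=\sum_A w_A^\NR(x)\Lambda_A(x;\afs{})$. Because $D(0)<\infty$ and the integrand is dominated by $C\rho$ under Eq.~\eqref{eq:auto:4}, this difference is a finite number, and the whole problem reduces to expanding $\int\rho\log S$. Substituting Eq.~\eqref{eq:auto:6} and using $\sum_A w_A^\NR=1$ gives
\begin{equation*}
	S = 1+\epsilon, \qquad \epsilon(x;\afs{}) := \afs{2}\,\bar g(x) + \bar R(x;\afs{}), \qquad \bar R := \sum_A w_A^\NR R_A ,
\end{equation*}
where $|\bar R|\le \afs{4}\bar h$ and $\bar h:=\sum_A w_A^\NR h_A$. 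Since $0\le w_A^\NR\le 1$, we have $\rho|\bar g|\le\sum_A\rho|g_A|$, $\rho\bar g^2\le \sum_A\rho g_A^2$ by Jensen, and similarly for $\bar h$, $\bar h^2$. All of these are therefore integrable against $\rho$.

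The core step is a pointwise Taylor bound for the logarithm that holds uniformly on the range of $S$. By Eq.~\eqref{eq:auto:4} and convexity, $m\le S\le M$, so $1+\epsilon\in[m,M]$. On this compact interval $|\log(1+\epsilon)-\epsilon|\le K\epsilon^2$, with $K$ depending only on $m$ and $M$. For instance, one can take $K=1/(2m^2)$ from the Lagrange remainder, since the second derivative of $\log$ is bounded by $1/m^2$ on $[m,M]$. This gives
\begin{equation*}
	\Big|\int_X\rho\log S\,dx - \afs{2}\int_X\rho\,\bar g\,dx\Big| \le \int_X \rho\,|\bar R|\,dx + K\int_X\rho\,\epsilon^2\,dx .
\end{equation*}
The first term is at most $\afs{4}\int\rho\bar h$, which is finite by the Cauchy--Schwarz remark preceding the theorem. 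For the second term, use $\epsilon^2\le 2\afs{4}\bar g^2+2\afs{8}\bar h^2$, which is $O(\afs{4})$ for $|\afs{}|\le\min(\alpha_{\mathrm{fs,0}},\alpha_{\mathrm{fs,max}},1)$, using $\rho\bar g^2,\rho\bar h^2\in L^1$. Combining the two terms yields Eq.~\eqref{eq:auto:9}.

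The main obstacle I expect is making the quadratic log remainder uniform in $x$. A naive expansion of $\log(1+\epsilon)$ needs $|\epsilon|$ small pointwise, which the hypotheses do not give, because $g_A$ and $h_A$ may be unbounded. The plan avoids this by never requiring $\epsilon$ to be small. Instead, the uniform two-sided bound of Eq.~\eqref{eq:auto:4} confines $1+\epsilon$ to $[m,M]$, where the second-order Taylor remainder holds globally, and the $L^2(\rho)$ integrability of $g_A$ and $h_A$ does the rest. This is precisely why the square-integrability hypotheses appear in the statement. Finally, I would note that the mean-zero condition $\int s_A^\NR g_A=0$ is not needed for the expansion itself; it only explains why $\int\rho\bar g$ need not vanish yet has no definite sign.
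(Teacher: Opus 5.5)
Your proposal is correct and follows essentially the same route as the paper's proof. Both write $\sum_A w_A^\NR\Lambda_A=1+u$ and use the uniform bounds in Eq.~\eqref{eq:auto:4} to get the global Lagrange-remainder estimate $|\log(1+u)-u|\le u^2/(2m^2)$, control $\int\rho u^2$ via Jensen and $(a+b)^2\le 2a^2+2b^2$, and bound the remainder term by $\afs{4}\int\rho h$. The only cosmetic difference is that you use the weighted $\bar h=\sum_A w_A^\NR h_A$ where the paper uses $h=\sum_A h_A$, which changes nothing.
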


\begin{proof}
Substitute the expansion in Eq.~\eqref{eq:auto:6} into the convex combination and write
\begin{equation}
	\sum_A w_A^\NR \Lambda_A = 1+u
\end{equation}
where
\begin{equation} 
	u(x;\afs{}) := \afs{2}\; \bar g(x)+R(x;\afs{})  \; .
\end{equation}
Defining $h := \sum_A h_A$, the aggregate remainder obeys
\begin{equation}
	|R|\le\afs{4} \sum_A w_A^\NR h_A\le \afs{4} h \; , 
\end{equation}
with $\rho h\in L^1(X)$.

It is not enough to observe that $u$ is formally of order $\afs{2}$.  The 
coefficient $\bar g$ is only integrable against $\rho$, not bounded, so $u$
need not be pointwise small and the naive expansion of the logarithm is not
justified. What the uniform bounds in Eq.~\eqref{eq:auto:4} supply instead is a
two-sided bound independent of $x$:
\begin{equation}
	u(x;\afs{}) = \sum_A w_A^\NR \Lambda_A - 1 \in [m-1,\,M-1]
	\quad\text{a.e.},\qquad |\afs{} |\le \alpha^{}_\text{fs,max} \; ,
\end{equation}
where $[m-1,M-1]$ is a compact subinterval of $(-1,\infty)$.
On that interval, the map $\phi(u):=\log(1+u)$ is smooth, with
$|\phi''(u)|=(1+u)^{-2}\le m^{-2}$. Taylor's theorem with Lagrange remainder
therefore gives the pointwise estimate
\begin{equation}
	\big|\log(1+u)-u\big|\le \frac{u^2}{2m^2}\qquad\text{a.e.} \; .
\end{equation}

It remains to check that $\rho u^2$ is integrable. By Jensen's inequality applied
to the convex combination, $\bar g^2\le\sum_A w_A^{\mathrm{NR}}g_A^2$ so
$\rho\,\bar g^2\in L^1(X)$. Combining this with
$(\afs{2}\bar g+R)^2\le 2\afs{4}\bar g^2+2R^2$ and $|R|\le\afs{4} h$, one obtains
\begin{equation}
	\rho\,u^2 \le 2\, \afs{4}\, \rho\,\bar g^2 + 2\,\afs{8}\,\rho\,h^2 \; .
\end{equation}
Both terms on the right are integrable by hypothesis.
Integrating the pointwise Taylor estimate against $\rho$ and inserting this bound affords 
\begin{align}
\begin{aligned}
	\int_X\rho \bigg|\log\Big(\sum_A w_A^\NR \Lambda_A\Big)-u\bigg| dx
	&\le\frac{1}{2m^2}\int_X\rho\,u^2\,dx
\\ &
	\le\frac{\afs{4}}{m^2}\int_X\rho\,\bar g^2\,dx+O(\afs{8})
	=O(\afs{4}) \; .
\end{aligned}
\end{align}
Finally, the first-order term is integrable against $\rho$ because
$\rho|g_A|\in L^1(X)$, and the remainder integral is bounded by
$\afs{4}\int_X\rho h\,dx = O(\afs{4})$. Substituting into Eq.~\eqref{eq:auto:2} affords Eq.~\eqref{eq:auto:9}.
\end{proof}

The coefficient of $\afs{2}$ in Eq.~\eqref{eq:auto:6} need not be nonzero. In particular, if
$\rho=\rho^{0,\NR}$, then normalization gives
\begin{equation}
	\int_X\rho\,\bar g\,dx=\sum_A N_A\int_X s_A^{\NR}g_A\,dx=0
\end{equation} 
and $D(\afs{})-D(0)=O(\afs{4})$.

The deformation $\Lambda_A(x;\afs{})$ acts solely on the reference measure used to construct 
partition-dependent atomic response parameters.  Dispersion corrections such as MBD, TS-vdW, or XDM 
are applied \textit{a~posteriori}, after the self-consistent field (SCF) equations have been solved for $\rho$.
Thus, the SCF equations (hence $\rho$) are unchanged by the deformation although the total energy 
$E = E_\text{SCF} + E_\text{disp}$ (including the dispersion correction) generally does change.

\section{Computational Results}\label{sec:calcs}
We now quantify the relativistic deformations, $\Lambda_A$ [Eq.~\eqref{eq:LambdaA}].  According to Lemma~\ref{lem:RAIM-KL}, 
the RAIM partition is the unique KL-stockholder minimizer of the deformation functional in Eq.~\eqref{eq:L-RAIM}.
The variational machinery is unchanged from the non-relativistic case; RAIM corrects the reference measure, not the response theory.

\subsection{Procedure}

Scalar contraction factors
\begin{equation}\label{eq:Lambdabar}
	\bar{\Lambda}_A := \left(
		\frac{ \langle r^3\rangle_\A^\R }{ \langle r^3\rangle_\A^\NR }
	\right)^{1/3}
\end{equation}
were computed using the \dirac\ program~\cite{DIRAC}, for all 118 elements plus some ionic charge states. These calculations employ
Dyall all-electron triple-$\zeta$ (\texttt{ae3z}) basis sets~\cite{DyaFae07} with a finite Gaussian nuclear distribution~\cite{VisDya97}.

Isolated atoms and ions were placed at the origin within a four-component relativistic
framework~\cite{SauJen03,Sau11}. Two calculations were performed per species: a
four-component DHF calculation (for $\rho^\R$) and a non-relativistic HF calculation (to obtain $\rho^\NR$),
both with the same program, basis set, and nuclear model~\cite{DIRAC}.
Where the occupation could be specified in the usual way (see below), $\rho^\NR$ was obtained from the
non-relativistic Hamiltonian as implemented in \dirac\ (keyword \texttt{.NONREL}).
For the remaining open-shell species, whose occupations were specified in terms of
relativistic quantum numbers, the four-component DHF calculation was instead repeated with
the speed of light raised from $c \approx 137$~a.u.\ to $c=10^4$~a.u.\ (keyword \texttt{.CVALUE}).
The Dirac-Coulomb Hamiltonian reduces to the Schr\"odinger Hamiltonian as $c\to\infty$,
with leading corrections that are $O(c^{-2})$~\cite{DyaFae07}, so the larger value of $c$
approximates the non-relativistic limit.  It suppresses the leading correction
by a factor of $(137/10^4)^2 \approx 2\times 10^{-4}$ relative the physical value of $c$.

This factor is not an error bound on the cubic radial moment $\langle r^3 \rangle$, however.  
The cubic radial moment $\langle r^3 \rangle$ weights the outer density through the
radial integrand $r^5\rho_A(r)$, so its convergence must be assessed separately from particle-number conservation. 
A finite Gaussian nuclear charge distribution (\texttt{.NUCMOD 2}) was used for both density calculations, so that
point-nucleus singularities are absent and the ratio is controlled near the
origin. Two-electron integrals were screened at $10^{-12}$~a.u.\ and no frozen-core approximations were used.

Three SCF protocols were required. Closed shells use a standard closed-shell SCF. 
Open shells with unambiguous ground-state occupation use
an average-of-configurations (AOC) approach with fractional occupation $m/g$ over the shell
degeneracy $g$. For Au($5\text{d}^{10}6\text{s}^1$), the closed-shell population is 78 and
the remaining electron is distributed over the two 6s spinors with
occupation 1/2 each; the 5d shell remains fully occupied. An open
$4\text{f}^7$ shell has occupation 7/14 per spinor.
Finally, for $Z\ge47$ a level shift of $0.3$~a.u.\ and Fock damping of $0.15$
were needed to prevent oscillations between competing configurations.  For La, near-degeneracy of the 
$5\text{d}^1$ and $4\text{f}^1$ configurations requires 
more aggressive intervention ($0.8$~a.u.\ level shift, damping coefficient of $0.40$, and 800 iterations).
The remaining 45 elements with ambiguous ground states (including all of the lanthanides and actinides along with 
selected 4d metals, Cs, Ru$^+$, and all transactinides) were converged in full atomic symmetry with
the occupation specified separately for each value of the relativistic
angular quantum number $\kappa$, where $\kappa=-(\ell+1)$ for $j=\ell+\tfrac{1}{2}$ and
$\kappa=\ell$ for $j=\ell-\tfrac{1}{2}$~\cite{DyaFae07}. [Each $\kappa$ labels one
spin--orbit component of a subshell (s$_{1/2}$, p$_{1/2}$, p$_{3/2}$, d$_{3/2}$, \ldots).]
This is the $\kappa$-selection option of \dirac\ (keyword \texttt{.KPSELE})~\cite{DIRAC}. 
By default, the SCF fills spinors in order of energy, which allows electrons to move
between near-degenerate d and f subshells from one iteration to the next; fixing
the number of electrons in each $\kappa$ channel prevents that redistribution.
The same $\kappa$ occupations were used for both the DHF and the non-relativistic HF calculations, in order 
to compare the two Hamiltonians at fixed channel occupations.

Radial densities were extracted with \texttt{**VISUAL}/\texttt{.DENSITY DFCOEF}.
Particle-number conservation
was verified to within $10^{-4}$ for
every species, and $\langle r^3\rangle$ was evaluated by trapezoidal quadrature on
the same grid for both the relativistic and non-relativistic Hamiltonians. 
Using the same basis and grid can reduce
correlated numerical errors in the ratio $\bar{\Lambda}_A$; the particle-number check alone does
not establish convergence of the cubic radial moment.

\begin{figure}[t]
\centering
\scalebox{1.0}{\includegraphics{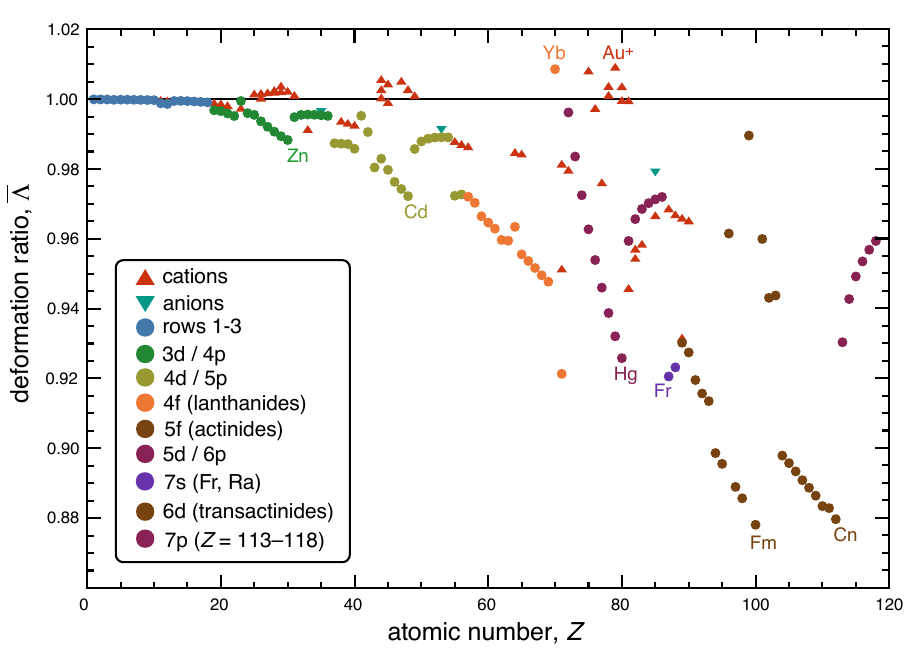}}
\caption{
	Deformation ratios $\bar{\Lambda}$ [Eq.~\protect\eqref{eq:Lambdabar}] versus atomic number, 
	for 118 elements and 64 ions.  DHF and non-relativistic HF 
	calculations using all-electron, triple-$\zeta$ Dyall basis sets were performed using the \dirac\ program.
}\label{fig:lambda}
\end{figure}

\subsection{Results}

Numerical results in Fig.~\ref{fig:lambda} illustrate the atomic ratios $\bar{\Lambda}$ for all neutral atoms up to $Z=118$ along with some ions.
These data reveal a rich periodic structure with chemically interpretable origin.
Gold contracts because the direct relativistic stabilization of the 6s orbital dominates, pulling electron density inward: 
$\bar{\Lambda}_\text{Au} = 0.932$, $\bar{\Lambda}_\text{Hg} = 0.926$, $\bar{\Lambda}_\text{Fr} = 0.921$, and 
$\bar{\Lambda}_\text{Fm} = 0.878$.   The Ru
cations do the opposite. Ionization of Ru ($4\text{d}^7 5\text{s}^1$, $\bar{\Lambda} = 0.983$) removes the contracted 5s electron
and $\bar{\Lambda}$ crosses unity, $\bar{\Lambda}_{\text{Ru}^+} = 1.005$; further ionization removes 4d electrons and $\bar{\Lambda}$
relaxes back toward unity, with $\bar{\Lambda}_{\text{Ru}^{2+}} = 1.003$ and $\bar{\Lambda}_{\text{Ru}^{3+}} = 1.000$. 
The same inversion holds for all coinage metal cations: 
$\bar{\Lambda}_{\text{Cu}^+} = 1.004$, 
$\bar{\Lambda}_{\text{Ag}^+} = 1.005$, and 
$\bar{\Lambda}_{\text{Au}^+} = 1.009$; see Table~\ref{tab:ions}.
Ionization removes the contracted $n$s electron, leaving a d$^{10}$ core whose indirect relativistic expansion is no longer offset~\cite{Pyy12}.
The trivalent lanthanide cations contract modestly but systematically, with the following values for $\bar{\Lambda}$: Gd$^{3+}$
($4\text{f}^7$, $0.9846$), Tb$^{3+}$ ($4\text{f}^8$, $0.9841$), Lu$^{3+}$ ($4\text{f}^{14}$,
$0.9812$). Stripping the $6\text{s}^2$ electron pair removes the direct contraction channel that
dominates the neutral atoms; what survives is the residual contraction of the
$5\text{s}^25\text{p}^6$ core, deepening along the series as the nuclear charge grows and the poorly screened 4f shell fills.
The lesson is consequential: neutral-atom $\bar\Lambda$ values are inapplicable to charged metal centers, and only a relativistic
DHF calculation on the ion affords a correct reference for such species.

\begin{table}[t] 
\centering
\caption{
	Selected deformation ratios for ions.$^a$
}\label{tab:ions}
\vspace{0.25cm}
\renewcommand{\arraystretch}{1.1}
\begin{threeparttable}
\begin{tabular}{l @{\hspace{0.35cm}} c @{\hspace{0.35cm}} c @{\hspace{2cm}} lcc}
\hline\hline
ion & configuration & $\bar{\Lambda}$$^b$ &
ion & configuration & $\bar{\Lambda}$$^b$ \\
\hline
Cu$^+$    	& $[\text{Ar}]3\text{d}^{10}$        			& 1.0036 & Tb$^{3+}$ 	& $4\text{f}^8$                        			& 0.9841 \\
Ag$^+$    	& $[\text{Kr}]4\text{d}^{10}$        			& 1.0050 & Gd$^{3+}$ 	& $4\text{f}^7$                        			& 0.9846 \\
Au$^+$    	& $[\text{Xe}]4\text{f}^{14}5\text{d}^{10}$ 	& 1.0090 & Lu$^{3+}$ 	& $4\text{f}^{14}$                   			& 0.9812 \\
Re$^+$    		& $5\text{d}^5$                        				& 1.0079 & Tl$^+$    	& $6\text{s}^2$                        			& 0.9456 \\
Pt$^{2+}$ 	& $5\text{d}^8$                        				& 1.0009 & Pb$^{2+}$ 	& $6\text{s}^2$                        			& 0.9542 \\
Ru$^{3+}$ 	& $4\text{d}^5$                       				& 1.0002 & Ac$^{3+}$ 	& $[\text{Rn}]$         	      			& 0.9657 \\
Rh$^{3+}$ 	& $4\text{d}^6$                        				& 0.9989 & I$^-$     		& $5\text{p}^6$                        			& 0.9916 \\
Os$^{3+}$ 	& $5\text{d}^5$                        				& 0.9972 & Br$^-$    	& $4\text{p}^6$                        			& 0.9967 \\
Ir$^{3+}$ 	& $5\text{d}^6$                        				& 0.9759 & F$^-$     	& $2\text{p}^6$                        			& 1.0000 \\
\hline\hline
\end{tabular}
\begin{tablenotes}[flushleft]\footnotesize
	\item
	$^a$Based on DHF and non-relativistic HF calculations in all-electron Dyall triple-$\zeta$ basis sets, 
	performed using the \dirac\ program.
	$^b$Eq.~\protect\eqref{eq:Lambdabar}.
\end{tablenotes}
\end{threeparttable}
\end{table}

The periodic structure extends coherently beyond the 5d maximum. The 6p block
recovers as p-filling dilutes the contracted s/d core
($\bar\Lambda_{\mathrm{Tl}}=0.959\to\bar\Lambda_{\mathrm{Rn}}=0.972$), and Fm
($5\text{f}^{12}$, $\bar\Lambda=0.878$) is the largest contraction in either f-block.
The 4f series contracts with shell filling from Gd ($4\text{f}^7$, $\bar{\Lambda}=0.9635$) to Lu
($\bar{\Lambda}=0.9515$), interrupted by a slight expansion at closed-shell Yb ($4\text{f}^{14}$, $\bar{\Lambda}=1.0086$). 
For $Z\le18$ all values satisfy $\bar\Lambda>0.999$, and the 3d row
reaches only $\bar\Lambda\approx0.99$. From the perspective of a relativistic
dispersion model, all of organic chemistry is effectively in the $\bar\Lambda=1$ limit.
Neutral Tb and Gd admit multiple low-lying SCF solutions giving
qualitatively different values ($1.100$ and $0.963$ at default AOC);
the trivalent cations,
whose relativistic and non-relativistic occupations coincide by construction,
are the better-defined reference.

The chemical consequence enters quantitatively through the generalized TS scaling factor
\begin{equation}\label{eq:gammaA}
	\gamma^{}_A = \frac{N_A}{N_A^0}\,\frac{\langle r^3\rangle_\A}{\langle r^3\rangle_\A^0} 
\end{equation}
[\textit{cf}.\ Eq.~\eqref{eq:TS}].  The scaling factor $\gamma^{}_A$ 
controls the atoms-in-molecule partition of the static polarizability:
$\alpha^{}_A = \gamma^{}_A \alpha_A^\text{ref}$, $C_{6,AA} = \gamma_A^2 C_{6,AA}^\text{ref}$, and 
$R_{\text{vdw},A} = \gamma_A^{1/3} R_{\text{vdW},A}^\text{ref}$.
This response scaling uses a scalar volume descriptor of the partition.
The contraction factor $\bar\Lambda_A$ summarizes the change in the free atom's 
cubic moment; specifying how it modifies response parameters is an additional
modeling choice beyond the KL characterization of the density partition.

\begin{table}[t]
\centering
\caption{
	Four-component DHF static dipole polarizabilities
	($\alpha$) and homonuclear dispersion coefficients ($C_6$), in atomic units.$^a$
}\label{tab:c6}
\vspace{0.25cm}
\renewcommand{\arraystretch}{1.1}
\begin{threeparttable}
\begin{tabular}{l @{\hspace{0.35cm}} d d c| @{\hspace{0.75cm}} ldd}
\hline\hline
species & \mc{1}{r}{$\alpha$~~~} & \mc{1}{r}{$C_6$~~~} && species & \mc{1}{r}{$\alpha$~~~} & \mc{1}{r}{$C_6$~~~} \\
\hline
He        	&  1.3070 	&   1.367 		&& Cu$^+$    		&  5.0119 	&  15.825 \\
Ne        	&  2.1337 	&   4.991 		&& Ag$^+$    		&  8.5878 	&  50.473 \\
Ar        	& 10.4332 	&  60.802 	&& Mg$^{2+}$ 	&  0.4696 	&   0.558 \\
Kr        	& 16.2546 	& 124.824 	&& Ca$^{2+}$ 	&  3.1876 	&  10.599 \\
Zn        	& 50.7082 	& 399.464 	&& Sr$^{2+}$ 		&  5.8069 	&  28.402 \\
Cd        	& 63.5135 	& 610.475 	&& Ba$^{2+}$ 	& 10.5904 	&  75.712 \\
Li$^+$    	&  0.1893 	&   0.077 		&& Zn$^{2+}$ 		&  2.2373 	&   5.851 \\
Na$^+$    &  0.9315 	&   1.469 		&& Cd$^{2+}$ 	&  4.9240 	&  24.734 \\
K$^+$     	&  5.3924 	&  23.071 	&& Yb$^{2+}$ 		&  6.3598 	&  31.870 \\
Rb$^+$   	&  9.0661 	&  54.178 	&& Lu$^{3+}$ 		&  4.1682 	&  18.010 \\
Cs$^+$   	& 15.7837 	& 133.536 	&&           &         	&         \\
\hline\hline
\end{tabular}
\begin{tablenotes}[flushleft]\footnotesize
	\item
	$^a$Four-component DHF linear response to an electric
	dipole perturbation, uncontracted Dyall \texttt{av3z} basis sets, finite
	nuclear model, evaluated at imaginary frequencies on a transformed
	Gauss-Legendre grid and integrated by Eq.~\protect\eqref{eq:CP}. The isotropic
	average $\alpha(\cmplxi\omega)=\mathrm{tr}(\bm\alpha)/3$ is used throughout. The zero-frequency 
	value is reported as $\alpha$ and $\omega=0$ is not a quadrature node.
	Values are uncorrelated; see the text.
\end{tablenotes}
\end{threeparttable}
\end{table}

The reference data on which that handle acts can be placed on the same
four-component footing. Table~\ref{tab:c6} collects static dipole
polarizabilities and homonuclear $C_6$ coefficients obtained from
four-component DHF linear response at imaginary frequency, using the 
Casimir-Polder integral in the uncoupled approximation~\cite{CasPol48,Lon65,Her21a},
\begin{equation}\label{eq:CP}
	C_{6,AB} = \frac{3\hbar}{\pi}\int_0^\infty
	\alpha_A(\cmplxi\omega)\,\alpha_B(\cmplxi\omega)\,d\omega \; .
\end{equation}
Numerical data in Table~\ref{tab:c6} represent
closed-shell atoms and ions whose response calculations converged.
Augmented basis sets were used for the response (uncontracted Dyall \texttt{av3z}), 
in order to capture diffuse excitations.
These are DHF response values without any post-Hartree-Fock treatment of electron correlation, which 
might change them substantially in either direction. They are not
offered as a replacement for correlated reference data. Their purpose is
internal consistency: a $C_6$ computed
on the same Hamiltonian and nuclear model as the densities from which the
partition is built. Fifteen of
the twenty-one species are cations, for which correlated reference data are
sparse; this is the regime in which the present construction is most needed.

\section{Conclusions}
We have developed a relativistic version of Hirshfeld's atoms-in-molecules construction and placed it on a formal 
footing as a KL stockholder problem with a Dirac-consistent reference measure. As in the non-relativistic case, the 
KL minimizer of the relativistic functional is unique (Lemma~\ref{lem:RAIM-KL}).  For the relativistic case, the 
minimum divergence admits an asymptotic
expansion to second order in the fine-structure constant under explicit
regularity (Theorem~\ref{thm:expansion}).  Finally, the optimized quantum Drude
oscillator constraint that closes the dispersion model from
$\{\alpha,C_6,\Rvdw{}\}$ admits a closed-form Lambert-$W$ solution
(Theorem~\ref{thm:LambertW}) that identifies the compact oscillator with the $W_{-1}$ branch.

This RAIM construction modifies the prior, not the variational principle;
information-theoretic uniqueness, $\sigma$-finiteness, and the dominated
convergence machinery transfer unchanged. What enters is a one-parameter
deformation of the reference measure, the small parameter being the
fine-structure constant. The Lambert-$W$ result is independent of the KL
extension and supplies an analytic width for a single isotropic oscillator
subject to the stated optimized QDO constraint.
Going forward, we will use this formalism to build a dispersion model; the damping
parameterization and its validation on the MBD model as used in symmetry-adapted perturbation 
theory~\cite{CarLaoLiu19,LiuCarHer19,CarLaoHer21,HerGraLiu25} is in progress.

\vspace{0.5cm}

\noindent\textbf{Acknowledgements}
K.P. thanks the late Prof.\ John F. Stanton, whose early teaching in coupled cluster theory 
and the algebraic reduction of many-body operators shapes this work more than any citation could record.
Calculations were performed at the Ohio Supercomputer Center~\cite{OSC}. \\


\noindent\textbf{Author contributions}
K.P. performed the conceptualization and investigation (including methodology, software, and validation), 
data curation, and analysis, and wrote the original draft. J.M.H contributed to the conceptualization, led the
project administration, supervision, and funding acquisition, and edited the manuscript. 
Both authors approved the final manuscript. \\

\noindent\textbf{Funding}
This work was supported by the U.S. Department of Energy, Office of Basic Energy Sciences,
Division of Chemical Sciences, Geosciences, and Biosciences under Award No.\ DE-SC0008550.  \\


\section*{Declarations}

\noindent \textbf{Conflict of interest} 
The authors declare no conflict of interest.


\def\bibsection{\section*{\refname}} 


\end{document}